\documentclass[letterpaper, 10 pt, conference]{ieeeconf}  
\IEEEoverridecommandlockouts                              
\overrideIEEEmargins                                      

\usepackage[utf8]{inputenc}
\usepackage{amsmath}
\usepackage{amsthm}
\usepackage{mathrsfs}
\usepackage{amssymb}
\usepackage{color}
\usepackage{cite} 
\usepackage[font=footnotesize]{caption}
\usepackage{comment}
\usepackage{float}
\usepackage{graphicx}
\usepackage{BernsteinStyle}
\usepackage{float}
\usepackage{lscape}
\usepackage{bm}

\usepackage{tikz}
\usetikzlibrary{shapes,arrows,calc,positioning}
\tikzstyle{bigblock} = [draw, fill=blue!20, rectangle, 
    minimum height=6em, minimum width=8em]
\tikzstyle{medblock} = [draw, fill=blue!20, rectangle, 
    minimum height=4em, minimum width=4em]    
\tikzstyle{mux} = [draw, fill=black!20, rectangle, 
    minimum height=5em, minimum width=0.1em]    
\tikzstyle{smallblock} = [draw, fill=blue!20, rectangle, 
    minimum height=3em, minimum width=4em]
\tikzstyle{sum} = [draw, fill=blue!20, circle, node distance=1cm]
\tikzstyle{signal} = [coordinate]
\tikzstyle{pinstyle} = [pin edge={to-,thin,black}]
\tikzstyle{block} = [draw, fill=blue!20, rectangle, 
    minimum height=3em, minimum width=6em]
\tikzstyle{blockS} = [draw, fill=blue!20, rectangle, 
    minimum height=3em, minimum width=4em]    
\tikzstyle{input} = [coordinate]
\tikzstyle{output} = [coordinate]
\usetikzlibrary{matrix}

\newcounter{example}
\newenvironment{example}[1][]{\refstepcounter{example}\par\medskip
   \noindent \textbf{\indent Example~\theexample. #1} \rmfamily}{\medskip}

\newcommand{\Deltalog}{\Delta_{\rm log}}

\usepackage{array}

\newtheorem{lemma}{Lemma}

\title{A Numerical Investigation of the\\ Transient and Asymptotic Accuracy of Recursive Least Squares}

\title{A Numerical Investigation of the\\ Transient Performance and Convergence Rate of Recursive Least Squares}

\title{Regularization-Induced Bias and Consistency in Recursive Least Squares}

\author{\large Brian Lai, Syed Aseem Ul Islam,  and Dennis S. Bernstein 
\thanks{ Brian Lai, Syed Aseem Ul Islam, and Dennis S. Bernstein are with the Department of Aerospace Engineering, University of Michigan, Ann Arbor, MI, USA. {\tt\small \{brianlai, aseemisl, dsbaero\}@umich.edu}}
}

\begin{document}
\maketitle

\begin{abstract}
Within the context of recursive least squares (RLS) parameter estimation, the goal of the present paper is to study the effect of regularization-induced bias on the transient and asymptotic accuracy of the parameter estimates.
We consider this question in three stages.
First, we consider regression with random data, in which case persistency is guaranteed.
Next, we apply RLS to finite-impulse-response (FIR) system identification and, finally, to infinite-impulse-response (IIR) system identification.
For each case, we relate the condition number of the regressor matrix to the transient response and rate of convergence of the parameter estimates.
\end{abstract}

\section{Introduction}

In many parameter estimation problems, data becomes available in real time, and parameter estimates are desired at each successive step.
For least-squares estimation, recursive least square (RLS) is computationally efficient relative to batch least squares due to the need to invert a matrix whose dimension is lower than the parameter dimension \cite{albert1965,astrom,ljung:83,aseemrls}.
An additional advantage of RLS is the ability to include a forgetting factor, which weights more recent data more heavily than older data.
In effect, the forgetting factor facilitates learning in response to system changes by promoting forgetting.
Since system changes can occur sporadically and unexpectedly, forgetting must be adaptive and variable, see \cite{adamVRF2020} and the references therein.

Since batch least squares inverts a regressor matrix over a window of data, a full-rank regressor matrix is required before the first parameter estimate is available.
The invertibility of the regressor matrix depends on a persistency condition \cite{goel2020recursive}.
In the absence of persistency, however, a regularization term can be included, resulting in regularized batch least squares. 
Regularization is an essential technique in parameter estimation \cite{Hansen1993,Golub1999,Cucker2002,Lu2010}.
However, the regularization term perturbs the regressor, thus resulting in {\it regularization-induced bias}.
Along the same lines, RLS includes a regularization term in the cost function, which provides parameter estimates from the initial time.
As in the case of regularized batch least squares, the regularization term leads to bias in the parameter estimates of RLS.

For the case of constant regularization, the goal of the present paper is to study the effect of regularization-induced bias on the transient and asymptotic accuracy of the parameter estimates.
Note that this bias is due solely to the regularization and thus is not due to noise, which can also result in bias.
Other works have studied the bias induced by noise without regularization, \cite{adaptive_control_textbook} chapter 2, and the bias induced by noise and regularization together \cite{Banerjee2014}.
We consider this question in three stages.
First, we consider regression with random data, in which case persistency is guaranteed.
Next, we apply RLS to finite-impulse-response (FIR) system identification and, finally, to infinite-impulse-response (IIR) system identification.
For each case, we relate the condition number of the regressor matrix to the transient response and rate of convergence of the parameter estimates.

{\bf Notation and Terminology.}
Define $\BBN \isdef\{1,2,3,\dots\}$ and $\BBN_0 \isdef\{0\}\cup\BBN$.
The symbols $\mathbf{S}^n,$ $\mathbf{N}^n,$ and $\mathbf{P}^n$ denote the sets of real $n\times n$ symmetric, symmetric positive-semidefinite, and symmetric positive-definite matrices, respectively.
For $A\in \mathbf{N}^n$, $\lambda_i(A)$ denotes the $i$th largest eigenvalue of $A$, $\lambda_{\max}(A) \isdef \lambda_1(A)$, and $\lambda_{\min}(A) \isdef \lambda_n(A)$. 
Furthermore, the condition number, $\kappa (A)$, of $A \in \mathbf{P}^n$ is defined by
\begin{align}
    \kappa(A) \isdef \frac{\lambda_{\rm max}(A)}{\lambda_{\rm min}(A)}.
    \label{eq::Cond_numb_Defn}
\end{align}
%
If $A$ is positive-semidefinite but not positive-definite, then $\kappa(A)\isdef\infty.$  
$\|\cdot\|$ is the Euclidean norm.

\section{Batch Least Squares}

We consider the measurement process  
\begin{align}
    y_k = \phi_k\theta, \label{yphit}
\end{align}
where $k = 0,1,2,\ldots$ is the time step, 
$y_k\in\BBR^p$ is the measurement at step $k$,
the matrix $\phi_k\in\BBR^{p\times n}$ is the regressor at step $k$, and $\theta\in\BBR^n$ is a column vector of $n$ unknown parameters. 
The objective is to use  $y_k$ and $\phi_k$ to estimate the components of $\theta.$
Since, in practice, $y_k$ and $\phi_k$ are corrupted by noise, \eqref{yphit} does not hold exactly, and we thus consider least squares estimates of $\theta$.
The batch approach to this problem is to collect a large amount of data and then apply least squares optimization to compute an estimate of $\theta.$
In particular, using data from the step window $i=0,1,\ldots,k,$ it follows from \eqref{yphit} that
\begin{align}
    Y_k = \Phi_k \theta, \label{YPhiT}
\end{align}
where
\begin{align}
    Y_k \isdef \left[ \begin{array}{c}
         y_0 \\
         \vdots \\
         y_k
    \end{array}\right]
    ,\quad \Phi_k \isdef \left[ \begin{array}{c}
         \phi_0 \\
         \vdots \\
         \phi_k
    \end{array}\right].
\end{align}
Note that \eqref{YPhiT} has the form $Ax=b,$ where $A$ denotes $\Phi_k,$ $x$ denotes $\theta,$ and $b$ denotes $Y_k.$

In the presence of noise corrupting the data $Y$ and $\Phi,$ \eqref{YPhiT} may not have a solution.
In this case, it is useful to replace \eqref{YPhiT} by the least-squares optimization of the cost
\begin{align}
     J_{R,k}(\hat\theta) &\isdef \sum_{i=0}^{k}( y_i -  \phi_i \hat\theta )^\rmT ( y_i - \phi_i \hat\theta) + (\hat\theta- \theta_0)^\rmT  R (\hat\theta- \theta_0 )\nn\\
    &=(Y_k-\Phi_k\hat\theta)^\rmT (Y_k-\Phi_k \hat\theta)+ (\hat\theta- \theta_0)^\rmT  R (\hat\theta- \theta_0 ), 
    \label{RBLS}
\end{align}
where $R \in \mathbf{P}^n$ and $\theta_0 \in \BBR^n$ is an initial estimate of $\theta.$ 
The regularization term $(\hat\theta- \theta_0)^\rmT  R (\hat\theta- \theta_0 )$ weighs the distance from the current estimate to the initial estimate and ensures that $ J_{R,k}$ has a unique global minimizer.
%
%
In particular, the {\it batch least-squares} (BLS) minimizer of \eqref{RBLS} is given by
\begin{align}
    \theta_{{\rm BLS},R,k+1} &\defeq \underset{ \hat\theta \in \BBR^n  }{\operatorname{argmin}} \ J_{R,k}(\hat\theta) \\
    &= (\Phi_k^\rmT \Phi_k^{} +R)^{-1}(\Phi_k^\rmT Y_k+R\theta_0),
    \label{LSsoln}
\end{align}
where the inverse $(\Phi_k^\rmT \Phi_k^{}+R)^{-1}$ exists due to the positive-definite regularization $R.$

Note that $(\Phi_k^\rmT \Phi_k^{}+R)^{-1}$ requires the computation of an $n\times n$ inverse,
and thus the computational requirement of the inverse is of order $n^3.$
Note also that the memory needed to store $\Phi_k$ grows with $k.$
Furthermore, if $\Phi_k$ has full column rank, then the regularization is not needed, and thus $R$ can be set to zero.  
In this case, \eqref{LSsoln} becomes
\begin{align}
\theta_{\rm BLS,0,k+1} = (\Phi_k^\rmT \Phi_k^{} )^{-1}\Phi_k^\rmT Y_k,\label{LSsoln2}
\end{align}
which is the unique solution of \eqref{YPhiT}.
%
\section{Recursive Least Squares}
Recursive least squares (RLS) provides a recursive update of the minimizer of \eqref{RBLS} as measurements become available.
Although RLS can be stated with a forgetting factor $\lambda$, the following result provides a statement of RLS with $\lambda = 1.$ 
\begin{theo}\label{thm1}
For all $k\in \BBN_0$, let $\phi_k\in \BBR^{p\times n}$ and $y_k \in\BBR^p.$
Furthermore, let $\theta_0 \in \BBR^{n}$, let $P_0 \in \BBR^{n\times n}$ be positive definite.
Furthermore, for all $k\in \BBN_0$, denote the minimizer of the function $J_{P_0^{-1},k}(\hat\theta)$ by 
\begin{align}
    \theta_{k+1} \isdef \underset{ \hat\theta \in \BBR^n  }{\operatorname{argmin}} \ J_{P_0^{-1},k}(\hat\theta).\label{thetakplus1}
\end{align}
Then, for all $k\in \BBN_0$, $\theta_{k+1}$ is given by
\begin{align}
    P_{k+1} &=P_{k}   - P_{k}\phi_k^\rmT (  I +  \phi_kP_{k}\phi_k^\rmT )^{-1} \phi_k P_{k},\label{eq:P_update_noInverse}\\
    \theta_{k+1} &= \theta_{k}  + P_{k+1}\phi_k^\rmT (   y_k - \phi_k \theta_{k}  ). \label{eq:theta_update_noInverse}
\end{align}
\end{theo}
Note, that, for all $k\ge 1,$
\begin{align}
    \theta_{k} = \theta_{{\rm BLS},P_0^{-1},k}.
\end{align}
\textbf{This implies any results for RLS and BLS are equivalent, where $\bm{R = P_0^{-1}}$.} However, the computational requirements of RLS are primarily determined by the inverse in \eqref{eq:P_update_noInverse}, which is of size $p\times p.$   When $p \ll n,$ the computational burden of this inverse is much less demanding than the $n\times n$ inverse required by BLS.
In addition, the storage requirements of RLS are of order $n^2,$ which does not grow with $k.$
Consequently, the computational and memory requirements of RLS are significantly less than those of BLS.
For future analysis, we use the notation   
\begin{align}
    \theta \isdef \left[ \begin{array}{c}
         \theta_{(1)} \\
         \vdots \\
         \theta_{(n)}
    \end{array}\right]
    ,\quad \theta_k \isdef \left[ \begin{array}{c}
         \theta_{k,(1)} \\
         \vdots \\
         \theta_{k,(n)}
    \end{array}\right].
\end{align}
\section{Regularization-Induced Bias} 
Note that $\theta_{{\rm BLS},R,k+1}$ given by \eqref{LSsoln} does not equal $\theta_{{\rm BLS},0,k+1}$ given by \eqref{LSsoln2}, and thus, the regularization $R$ induces a bias in the estimate of $\theta.$
This bias decreases as more data are available, as demonstrated by the following example.
\begin{example}
\label{ex1}
For all $k \in \BBN_0$ let the components of $\theta \in \mathbb{R}^{10}$ and $\phi_k \in \mathbb{R}^{1 \times 10}$ be sampled from the uniform distribution on $[-1,1]$, and thus $y_k$ is scalar. 
Let $R = P_0^{-1} = rI$, where $r$ is a positive number and let $\theta_{0,(i)} = 0$ for $i = 1,\hdots, n$. 
For each value of $r$, 100 independent simulations are run, and, at each step $k,$ the estimation error $\|\theta_k - \theta\|$ is averaged over the 100  simulations.
Figure \ref{Reg-Induced-Bias} shows the averaged estimation error with $k\in[0,10^4].$
Note that $\|\theta_k - \theta\|$ sharply decreases around $k= 10,$ and that the decrease is larger for smaller values of $r.$
In addition, $\|\theta_k - \theta\|$  converges with the same linear log-log slope for all values of $r$.
\hfill{\large$\diamond$}
\begin{figure}[ht]
    \begin{center}
     \includegraphics[trim = 0mm 0mm 15mm 0mm, clip, width=0.45\textwidth]{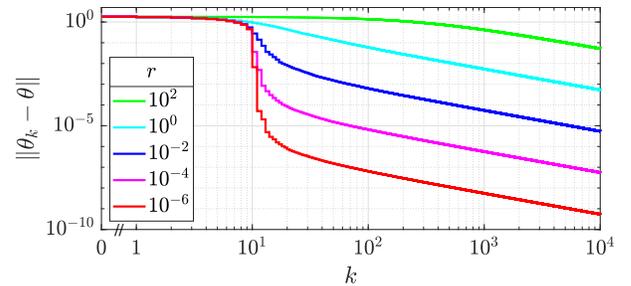}
    \caption{\footnotesize Estimation error. The regularization is given by $R = rI.$   For each value of $r,$ 100 independent simulations are run where, for all $k\in \BBN_0$, each component of $\theta \in \mathbb{R}^{10}$ and $\phi_k \in \mathbb{R}^{1 \times 10}$ sampled from the uniform distribution on $[-1,1]$. The norm of the estimation error is averaged over the 100 trials for $k\in[0,10^4]$.}
    \label{Reg-Induced-Bias}
    \end{center}
\end{figure}
\end{example}
\section{Consistency of the Regularized Solution}
The identities 
\begin{align}
    P_{k+1}^{-1} &= P_k^{-1} + \phi_k^\rmT \phi_k, \label{eq:Pinveqn} \\
    P_{k}^{-1} &= P_0^{-1} + \sum_{i=0}^{k-1}\phi_i^\rmT \phi_i\label{eq:Pinvtot}, \\ 
    \theta_k - \theta &= P_k P_0^{-1}(\theta_0 - \theta),\label{eq:xtildetot} \\
    \theta_k &= (I-P_k P_0^{-1})\theta +  P_k P_0^{-1} \theta_0\label{eq:xktot}
\end{align}
are useful. Proofs of these identities can be found in \cite{goel2020recursive}. The next results follows. 
\begin{prop}
Consider the notation and assumptions of Theorem 1.  Then the following statements hold:
\begin{enumerate}
    \item For all $k\in \BBN_0$, $P_{k+1} \le P_k$.
    \label{RLS_prop_1}
    \item $P_\infty \isdef \lim_{k\to\infty}P_k$ exists, and $P_\infty\in\BFN^n$.  
    \label{RLS_prop_2}
    \item $\theta_{\infty} \isdef \lim_{k\to\infty} \theta_k$ exists, and  $\theta_{\infty}=(I-P_\infty P_0^{-1})\theta +  P_\infty P_0^{-1} \theta_0$.
    \label{RLS_prop_3}
    \item  $\theta_{\infty}=\theta$ if and only if $ P_{\infty} P_0^{-1}\theta=  P_{\infty} P_0^{-1}\theta_0.$
    \label{RLS_prop_4}
    \item  If $P_\infty=0$, then  $\theta_{\infty}  =\theta$.
    \label{RLS_prop_5}
    \item If, for all $\theta_0\in\BBR^n,$ $\theta_{\infty}  =\theta,$ then $P_\infty=0$. 
    \label{RLS_prop_6}
\end{enumerate}
\end{prop}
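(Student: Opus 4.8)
The plan is to lean almost entirely on the four displayed identities together with elementary properties of the Loewner order, treating the six claims in order. For claim 1, I would first confirm by induction on \eqref{eq:Pinveqn} that each $P_k\in\mathbf{P}^n$: indeed $P_0^{-1}\in\mathbf{P}^n$, and adding the positive-semidefinite term $\phi_i^\rmT\phi_i$ keeps $P_k^{-1}$ positive definite, so $P_k$ is positive definite as well. Then \eqref{eq:Pinveqn} gives $P_{k+1}^{-1}-P_k^{-1}=\phi_k^\rmT\phi_k\ge 0$, i.e. $P_{k+1}^{-1}\ge P_k^{-1}$, and the order-reversing property of inversion on $\mathbf{P}^n$ yields $P_{k+1}\le P_k$.

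Claim 2 is where the only genuine analysis enters. By claim 1 the sequence $\{P_k\}$ is nonincreasing in the Loewner order and bounded below by $0$, but convergence of a matrix sequence does not follow from monotonicity alone without an argument. I would reduce to the scalar case: for each fixed $x\in\BBR^n$ the real sequence $x^\rmT P_k x$ is nonincreasing and bounded below by $0$, hence convergent; polarization then forces every entry of $P_k$ to converge, so $P_\infty\isdef\lim_{k\to\infty}P_k$ exists. Letting $k\to\infty$ in $x^\rmT P_k x\ge 0$ gives $x^\rmT P_\infty x\ge 0$ for all $x$, so $P_\infty\in\mathbf{N}^n$. I expect this monotone-convergence step to be the main obstacle, in the sense that it is the only place where more than algebra is required.

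Claims 3--5 then follow by continuity of matrix operations. Since $P_k\to P_\infty$ and $P_0^{-1}$ is fixed, $P_kP_0^{-1}\to P_\infty P_0^{-1}$; passing to the limit in \eqref{eq:xktot} gives both the existence of $\theta_\infty$ and its stated formula, proving claim 3. Taking the limit in \eqref{eq:xtildetot} gives $\theta_\infty-\theta=P_\infty P_0^{-1}(\theta_0-\theta)$, so $\theta_\infty=\theta$ holds if and only if $P_\infty P_0^{-1}(\theta_0-\theta)=0$, which is exactly the condition $P_\infty P_0^{-1}\theta=P_\infty P_0^{-1}\theta_0$ of claim 4; claim 5 is the special case $P_\infty=0$.

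Claim 6 is the only converse. The key observation is that the recursion \eqref{eq:P_update_noInverse} for $P_k$ involves only the regressors $\phi_i$ and the initialization $P_0$, so $P_\infty$ does not depend on $\theta_0$. Using again $\theta_\infty-\theta=P_\infty P_0^{-1}(\theta_0-\theta)$, the hypothesis that $\theta_\infty=\theta$ for every $\theta_0\in\BBR^n$ forces $P_\infty P_0^{-1}(\theta_0-\theta)=0$ for all $\theta_0$; as $\theta_0-\theta$ ranges over all of $\BBR^n$ this yields $P_\infty P_0^{-1}=0$, and invertibility of $P_0^{-1}$ gives $P_\infty=0$. The one subtlety worth stating explicitly is this $\theta_0$-independence of $P_\infty$, which is what legitimizes quantifying over $\theta_0$ while $P_\infty$ stays fixed.
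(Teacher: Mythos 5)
Your proof is correct and follows the same skeleton as the paper's: the same identities drive everything, claim 1 comes from \eqref{eq:Pinveqn} plus positive semidefiniteness of $\phi_k^\rmT\phi_k$, claim 2 from Loewner monotonicity and boundedness below, and claims 3--5 from passing to the limit in \eqref{eq:xtildetot} and \eqref{eq:xktot}. The differences are ones of completeness and of logical direction in the last claim. The paper's proof of claims 1--5 consists of one-line pointers; you fill in the two places where something actually needs checking, namely the induction showing each $P_k$ stays positive definite (so that order-reversal of inversion applies) and the quadratic-form/polarization argument showing that a Loewner-nonincreasing sequence bounded below by $0$ converges entrywise --- the paper silently assumes this monotone-convergence fact, which you correctly identify as the only nontrivial analytic step. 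For claim 6 the paper argues by contrapositive: assuming $P_\infty \ne 0$, it fixes $\theta_0$ and produces a $\theta$ with $P_\infty P_0^{-1}\theta \ne P_\infty P_0^{-1}\theta_0$; strictly speaking this varies the true parameter $\theta$ rather than the initialization $\theta_0$, so it proves the contrapositive of a statement quantified over $\theta$, not the statement as written. Your direct argument --- $P_\infty P_0^{-1}(\theta_0-\theta)=0$ for every $\theta_0$ forces $P_\infty P_0^{-1}=0$, hence $P_\infty=0$ by invertibility of $P_0^{-1}$ --- matches the proposition's actual quantification, and your explicit observation that $P_\infty$ is independent of $\theta_0$ (since the recursion \eqref{eq:P_update_noInverse} never involves $\theta_0$) is exactly the point that legitimizes quantifying over $\theta_0$ with $P_\infty$ held fixed; the paper leaves both of these points unstated.
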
 
\noindent {\it Proof.} 
\begin{enumerate}
    \item Note \eqref{eq:Pinveqn} and that, for all $k\in \BBN_0$, $\phi_k^\rmT\phi_k \in \BFN^n$ . 
    \item Note that, for all $k\in \BBN_0$, $P_k \in \BFN^n$ and $P_{k+1} \le P_k$.
    \item The result follows directly from \eqref{eq:xktot}.
    \item Substitute $\theta_\infty = \theta$ into the result of \ref{RLS_prop_3}.
    \item Substitute $P_\infty=0$ into the second equation of \ref{RLS_prop_4}.
    \item Let $\theta_0 \in \BBR^n$ and $P_\infty \ne 0$. Then, there exists $\theta \in \BBR^n$ such that $P_{\infty} P_0^{-1}\theta \ne  P_{\infty} P_0^{-1}\theta_0$. So, by \ref{RLS_prop_4}, $\theta_\infty \ne \theta$.
\end{enumerate}
\begin{flushright}
\mbox{$\square$}
\end{flushright}
\subsection{Persistently Exciting Regressors}
\begin{defin}\label{def:reg_WP}  $(\phi_k)_{k=0}^\infty\subset \BBR^{p\times n}$ is {\it persistently exciting} (PE) if there exist $\alpha > 0$, $\beta > 0$, and $N \in \BBN_0$ such that, for all $j \in\BBN_0$,
\begin{align}
    \alpha I_n \le \sum_{i=0}^{N} \phi_{i+j}^\rmT\phi_{i+j} \le \beta I_n.\label{eq:persistsum}
\end{align}
\end{defin}
\begin{prop}
$(\phi_k)_{k=0}^\infty\subset \BBR^{p\times n}$ is PE if and only if
\begin{align}
    C \defeq \lim_{k \rightarrow \infty} \frac{1}{k} \Phi_k^\rmT \Phi_k 
    \label{eq::PE_matrix_C}
\end{align}
exists and is positive definite. 
\end{prop}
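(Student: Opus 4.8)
The plan is to prove the equivalence by working directly with the partial sums $G_k \isdef \Phi_k^\rmT\Phi_k = \sum_{i=0}^{k}\phi_i^\rmT\phi_i$ and the window sums $\Psi_j\isdef\sum_{i=0}^{N}\phi_{i+j}^\rmT\phi_{i+j}$ appearing in \eqref{eq:persistsum}, relating both to the Cesàro average $\tfrac1k G_k$. Throughout I would use that each $\phi_i^\rmT\phi_i\in\mathbf{N}^n$, so $(G_k)$ is nondecreasing in the Loewner order and the window sums are increments of $G_k$ over consecutive blocks, namely $\Psi_j = G_{j+N}-G_{j-1}$ with the convention $G_{-1}=0$.

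For the forward direction (PE implies that $C$ exists and is positive definite), I would exploit a non-overlapping tiling of the index set. Given PE data $\alpha,\beta,N$, partition $\{0,\dots,m(N+1)-1\}$ into $m$ consecutive windows of length $N+1$ and apply \eqref{eq:persistsum} to each block; summing the $m$ inequalities $\alpha I_n\le\Psi_{\ell(N+1)}\le\beta I_n$ telescopes to $m\alpha I_n\le G_{m(N+1)-1}\le m\beta I_n$. Choosing $m=\lfloor(k+1)/(N+1)\rfloor$ and using monotonicity of $(G_k)$ to absorb the single partial trailing window, I obtain for all large $k$ a sandwich of the form $\tfrac{m}{k}\alpha I_n\le\tfrac1k G_k\le\tfrac{m+1}{k}\beta I_n$. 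Since $m/k\to 1/(N+1)$, every limit point of $\tfrac1k G_k$ lies between $\tfrac{\alpha}{N+1}I_n$ and $\tfrac{\beta}{N+1}I_n$, and is in particular positive definite; this yields positive definiteness of $C$ once its existence is granted.

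For the converse (existence of a positive-definite $C$ implies PE), I would fix $\epsilon>0$ small enough that $C-\epsilon I_n\in\mathbf{P}^n$ and choose $K$ so that $\|\tfrac1m G_m-C\|\le\epsilon$ for all $m\ge K$. Writing $\Psi_j=G_{j+N}-G_{j-1}$ and substituting the heuristic $G_m\approx mC$ suggests $\Psi_j\approx(N+1)C$, which would give the two-sided bound in \eqref{eq:persistsum} with $\alpha,\beta$ near $(N+1)\lambda_{\min}(C)$ and $(N+1)\lambda_{\max}(C)$, the finitely many indices $j<K$ being absorbed into the constants.

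The step I expect to be the main obstacle is precisely this last passage. The exact identity $\Psi_j=(N+1)C+\bigl[(j+N)E_{j+N}-(j-1)E_{j-1}\bigr]$, with $E_m\isdef\tfrac1m G_m-C\to 0$, shows that convergence of the average does not by itself control the increments uniformly in $j$, because the vanishing error is multiplied by $j$. Hence recovering a fixed window length $N$ and uniform bounds $\alpha,\beta$ valid for all $j$ requires more than existence and positive definiteness of $C$: it needs uniform, rather than merely asymptotic, control of the increments, which must come from a (presumably implicit) boundedness or regularity assumption on the regressors $(\phi_k)$. I would therefore make that hypothesis explicit and use it to bound $\Psi_j$ uniformly; the same uniformity issue is what separates \emph{existence} of the limit $C$ in the forward direction from the mere sandwiching of its limit points obtained above, so this is the other place the assumption enters.
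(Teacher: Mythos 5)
You should know at the outset that the paper does not actually prove this proposition: its entire proof is ``See page 64 of \cite{adaptive_control_textbook}.'' So there is no argument in the paper to compare yours against, and your proposal must be judged on its own terms. On those terms, the part you completed is sound: the disjoint-window tiling is valid, the telescoped bound $m\alpha I_n \le \Phi_{m(N+1)-1}^\rmT\Phi_{m(N+1)-1}^{} \le m\beta I_n$ is correct, and it does pin every limit point of $\tfrac1k\Phi_k^\rmT\Phi_k^{}$ between $\tfrac{\alpha}{N+1}I_n$ and $\tfrac{\beta}{N+1}I_n$. Your two diagnoses are also correct: this argument does not give \emph{existence} of the limit $C$, and in the converse direction Ces\`aro convergence of $\tfrac1m G_m$ cannot control the window sums $G_{j+N}-G_{j-1}$ uniformly in $j$, precisely because the vanishing error $E_m = \tfrac1m G_m - C$ enters multiplied by $m$.

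Where you are too cautious is in hoping these gaps can be closed by an implicit hypothesis such as boundedness of the regressors. They cannot: under the paper's Definition 1, the equivalence is false in both directions, and the following counterexamples use bounded scalar regressors ($p=n=1$). For the forward direction, let $\phi_k$ take the value $1$ on index blocks $[10^{m-1},10^m)$ with $m$ even and the value $2$ on blocks with $m$ odd; then Definition 1 holds with $N=0$, $\alpha=1$, $\beta=4$, yet $\tfrac1k\sum_{i=0}^{k}\phi_i^2$ has limit inferior at most $1.3$ and limit superior at least $3.7$, so $C$ does not exist. For the converse, let $\phi_k=0$ for $k\in[2^m,2^m+m)$, $m\ge1$, and $\phi_k=1$ otherwise; the zero indices have vanishing density, so $C=1>0$ exists and is positive definite, yet for any $N$ the window of length $N+1$ starting at $j=2^{N+1}$ lies entirely inside a zero block, so no $\alpha>0$ works and PE fails. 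Thus the missing hypothesis is not boundedness but a condition of quasi-stationarity type, i.e., uniform-in-$j$ convergence of windowed averages, which is part of the signal framework of the cited textbook and is silently dropped in the paper's statement. Your instinct to make an extra assumption explicit was exactly right; the assumption just has to be stronger than the one you proposed.
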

\begin{proof}
See page 64 of \cite{adaptive_control_textbook}.
\end{proof}
%
%
%
%
%
%
\begin{lemma}
Let $(A_k)_{k=0}^{\infty} \subset \BBR^{n \times n}$ and assume that,  for all $k\in \BBN_0$, $A_k$ is nonsingular, and $A\isdef \lim_{k \rightarrow \infty} A_k$ exists and is nonsingular. Then, $\lim_{k \rightarrow \infty} A_k^{-1}$ exists and $\lim_{k \rightarrow \infty} A_k^{-1} = A^{-1}$.
\label{matrix_inverse_lemma}
\end{lemma}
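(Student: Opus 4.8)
The plan is to exploit the algebraic identity relating the inverses of two nonsingular matrices. For any nonsingular $A_k$ and $A$, left-multiplying $A - A_k$ by $A_k^{-1}$ and right-multiplying by $A^{-1}$ yields
\begin{align}
A_k^{-1} - A^{-1} = A_k^{-1}(A - A_k)A^{-1}.
\end{align}
Taking norms and using submultiplicativity gives
\begin{align}
\|A_k^{-1} - A^{-1}\| \le \|A_k^{-1}\|\,\|A - A_k\|\,\|A^{-1}\|.
\end{align}
Since $A \isdef \lim_{k\to\infty}A_k$, the factor $\|A - A_k\| \to 0$, and $\|A^{-1}\|$ is a fixed constant. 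Thus the claim will reduce to showing that $\|A_k^{-1}\|$ stays bounded in $k$.

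First I would establish this uniform bound. The key observation is that the smallest singular value $\sigma_{\min}(\cdot)$ is continuous (indeed $1$-Lipschitz with respect to the spectral norm), so $\sigma_{\min}(A_k) \to \sigma_{\min}(A)$. Because $A$ is nonsingular, $\sigma_{\min}(A) > 0$, and hence there exists $K \in \BBN_0$ such that $\sigma_{\min}(A_k) \ge \tfrac{1}{2}\sigma_{\min}(A)$ for all $k \ge K$. Since $\|A_k^{-1}\| = 1/\sigma_{\min}(A_k)$ in the spectral norm, this yields $\|A_k^{-1}\| \le 2/\sigma_{\min}(A)$ for all $k \ge K$; finite-dimensional norm equivalence makes the choice of norm immaterial. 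Combining this with the previous estimate, for all $k \ge K$,
\begin{align}
\|A_k^{-1} - A^{-1}\| \le \frac{2\,\|A^{-1}\|}{\sigma_{\min}(A)}\,\|A - A_k\|,
\end{align}
and the right-hand side tends to $0$, which proves $\lim_{k\to\infty} A_k^{-1} = A^{-1}$.

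The main obstacle will be the uniform bound on $\|A_k^{-1}\|$: the identity and the norm estimate are immediate, but without controlling $\|A_k^{-1}\|$ the product on the right could in principle fail to vanish, so the continuity of $\sigma_{\min}$ together with $\sigma_{\min}(A) > 0$ is what carries the argument. An alternative, fully elementary route that sidesteps the boundedness issue is Cramer's rule, $A_k^{-1} = (\det A_k)^{-1}\operatorname{adj}(A_k)$: each entry of $\operatorname{adj}(A_k)$ and the scalar $\det A_k$ is a polynomial in the entries of $A_k$ and hence continuous, so $A_k \to A$ gives $\operatorname{adj}(A_k) \to \operatorname{adj}(A)$ and $\det A_k \to \det A \ne 0$, whence $(\det A_k)^{-1} \to (\det A)^{-1}$ and $A_k^{-1} \to A^{-1}$ entrywise. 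I would likely present the norm-based proof as primary, since it fits the condition-number and singular-value language used elsewhere in the paper, and mention Cramer's rule as a self-contained alternative.
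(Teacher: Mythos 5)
Your proposal is correct, but it is doing more work than the paper does: the paper's entire proof is a one-line appeal to the continuity of the map $X \mapsto X^{-1}$ on the set of nonsingular matrices, taken as a known fact. What you have written is, in effect, a proof of that continuity itself, and both of your routes are sound. The primary argument rests on the identity $A_k^{-1} - A^{-1} = A_k^{-1}(A - A_k)A^{-1}$, submultiplicativity, and the uniform bound $\|A_k^{-1}\| \le 2/\sigma_{\min}(A)$ for large $k$, which you correctly obtain from the Lipschitz continuity of $\sigma_{\min}$ and $\sigma_{\min}(A) > 0$; you rightly identify this boundedness step as the crux, since without it the norm estimate proves nothing. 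Your Cramer's-rule alternative ($A_k^{-1} = (\det A_k)^{-1}\operatorname{adj}(A_k)$, with determinant and cofactors polynomial hence continuous, and $\det A \ne 0$) is the standard textbook proof of exactly the continuity statement the paper invokes. What each approach buys: the paper's citation keeps the lemma to one sentence, appropriate for a conference paper where this is a routine fact; your version is self-contained and quantitative (it yields an explicit error bound $\|A_k^{-1} - A^{-1}\| \le \bigl(2\|A^{-1}\|/\sigma_{\min}(A)\bigr)\|A - A_k\|$, i.e., local Lipschitz continuity of inversion), which connects naturally to the singular-value and condition-number language used later in the paper. Either of your arguments would be accepted as a full proof; the paper simply chose not to spell one out.
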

\begin{proof}
This result follows from the continuity of the matrix inverse on the set of nonsingular matrices.
\end{proof}

Under persistent excitation, the following result describes the asymptotic rate of convergence of the regularized RLS solution.



\begin{theo}
\label{theo::PE_implies_1/k}
Let $\theta,\theta_0 \in \BBR^n$, for all $k\in\BBN_0,$ let $y_k$ be given by \eqref{yphit}, and let $P_k$ and $\theta_k$ be given by \eqref{eq:P_update_noInverse} and \eqref{eq:theta_update_noInverse}, respectively. 
Assume that $(\phi_k)_{k=0}^\infty$ is PE,  and define $C$ by \eqref{eq::PE_matrix_C}.  Then,
$\lim_{k \rightarrow \infty}k(\theta_k - \theta) = C^{-1} R (\theta_0 - \theta)$.
\end{theo}

\noindent {\it Proof.}
Since
\begin{align*}
    \theta_k &= (\Phi_{k-1}^\rmT \Phi_{k-1} + R)^{-1}(\Phi_{k-1}^\rmT Y_{k-1} + R\theta_0) \\
    &= (\Phi_{k-1}^\rmT \Phi_{k-1} + R)^{-1}(\Phi_{k-1}^\rmT \Phi_{k-1}\theta + R\theta - R\theta + R\theta_0) \\
    &= (\Phi_{k-1}^\rmT \Phi_{k-1} + R)^{-1}((\Phi_{k-1}^\rmT \Phi_{k-1} + R)\theta + R(\theta_0 - \theta)) \\
    &= \theta + (\Phi_{k-1}^\rmT \Phi_{k-1} + R)^{-1} R (\theta_0 - \theta),    
\end{align*}    
it follows that 
\begin{align}
    \theta_k - \theta = (\Phi_{k-1}^\rmT \Phi_{k-1} + R)^{-1} R (\theta_0 - \theta).
    \label{eq::theta-theta_k}
\end{align}
Hence, by Lemma \ref{matrix_inverse_lemma},
\begin{align}
    \lim_{k \rightarrow \infty} k(\theta_k - \theta) &= \lim_{k \rightarrow \infty} (\frac{1}{k}\Phi_{k-1}^\rmT \Phi_{k-1} + \frac{1}{k}R)^{-1} R (\theta_0 - \theta)\nn\\
    & = C^{-1}R (\theta_0 - \theta).\tag*{\mbox{$\square$}}
\end{align}
%


Assume that $\theta_0 \ne \theta$ and that the assumptions of Theorem \ref{theo::PE_implies_1/k} hold. Then, $v\isdef C^{-1} R (\theta_0 - \theta )\ne0.$       
Therefore, for all $i = 1,\hdots,n$, $(\theta_{k,(i)} - \theta_{(i)}) = O(1/k)$ as $k\to\infty.$
If, in addition, $v_{(i)}=0,$ then $(\theta_{k,(i)} - \theta_{(i)}) = o(1/k)$ as $k\to\infty.$

\subsection{The Condition Number}
\begin{lemma}
If $A \in \mathbf{P}^n$, then, for all nonzero $\alpha\in\BBR,$ $\kappa(A) = \kappa(\alpha A)$.
\label{cond_numb_scaled_matrix}
\end{lemma}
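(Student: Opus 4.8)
The plan is to reduce the claim to the elementary fact that scaling a symmetric matrix by a scalar scales each of its eigenvalues by that same scalar. Since $A\in\mathbf{P}^n$ is symmetric, the spectral theorem provides an orthogonal matrix $U$ and a diagonal matrix $\Lambda = \operatorname{diag}(\lambda_1(A),\dots,\lambda_n(A))$ such that $A = U\Lambda U^\rmT$. Multiplying through by $\alpha$ gives $\alpha A = U(\alpha\Lambda)U^\rmT$, which exhibits the eigenvalues of $\alpha A$ as exactly $\alpha\lambda_1(A),\dots,\alpha\lambda_n(A)$.

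First I would treat the case $\alpha > 0$, which is the one consistent with the domain of the condition number as defined in \eqref{eq::Cond_numb_Defn}, since $\alpha A$ then remains in $\mathbf{P}^n$. Because multiplication by a positive scalar preserves the ordering of the real eigenvalues, the largest and smallest transform as $\lambda_{\max}(\alpha A) = \alpha\,\lambda_{\max}(A)$ and $\lambda_{\min}(\alpha A) = \alpha\,\lambda_{\min}(A)$. Substituting these into the definition of $\kappa$ and cancelling the common factor $\alpha$ from numerator and denominator yields $\kappa(\alpha A) = \kappa(A)$ directly.

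To cover every nonzero $\alpha$, including $\alpha < 0$, I would note that for negative $\alpha$ the matrix $\alpha A$ is negative definite and hence falls outside $\mathbf{P}^n$; in that regime the condition number must be read through the magnitudes of the eigenvalues (equivalently, through the singular values $|\alpha|\,\sigma_i(A)$), whereupon the factor $|\alpha|$ again cancels identically. The computation is otherwise immediate, so I expect no genuine obstacle here. The only point requiring care is this domain issue, namely making precise in what sense $\kappa(\alpha A)$ is intended when $\alpha A$ is not positive definite; once that convention is fixed, the cancellation argument closes the proof in a single line.
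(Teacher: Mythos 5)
The paper states Lemma \ref{cond_numb_scaled_matrix} without any proof, so there is no argument of the authors' to compare against; your eigenvalue-scaling proof is correct and is precisely the standard argument the omission presupposes. For $\alpha>0$, writing $\alpha A = U(\alpha\Lambda)U^\rmT$ and noting that positive scaling preserves the ordering of eigenvalues gives $\lambda_{\max}(\alpha A)=\alpha\,\lambda_{\max}(A)$ and $\lambda_{\min}(\alpha A)=\alpha\,\lambda_{\min}(A)$, after which the factor $\alpha$ cancels in \eqref{eq::Cond_numb_Defn} --- exactly as you argue. Your point about $\alpha<0$ is also well taken and identifies a genuine imprecision in the lemma as stated: the paper defines $\kappa$ only on positive-semidefinite matrices, so ``for all nonzero $\alpha$'' oversteps the paper's own definition, and some convention (such as reading $\kappa$ through singular values, where $|\alpha|$ cancels) is required to give $\kappa(\alpha A)$ meaning when $\alpha A$ is negative definite. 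Since the paper's sole use of this lemma is in the proof of Proposition \ref{prop:cond_num} with $\alpha = 1/k > 0$, your positive-scalar case covers everything the paper actually relies on.
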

%
%

\begin{lemma}
Let $(A_k)_{k=0}^{\infty} \subset \mathbf{P}^n$ and assume that  $A\isdef\lim_{k\to\infty}A_k$  exists.     Then $\kappa(A_k) \rightarrow \kappa(A)$ as $k \rightarrow \infty$. 
\label{lemma:cond_num_of_sequence}
\end{lemma}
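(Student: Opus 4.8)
The plan is to reduce the claim to the continuity of the extreme eigenvalues of a symmetric matrix under perturbation. The key fact is that, for symmetric matrices, each ordered eigenvalue $\lambda_i(\cdot)$ is a continuous function of the matrix entries; this is classical and follows, for instance, from Weyl's perturbation inequality. Since $A_k \to A$, I would first invoke this continuity to conclude that $\lambda_{\max}(A_k) \to \lambda_{\max}(A)$ and $\lambda_{\min}(A_k) \to \lambda_{\min}(A)$.

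Next I would split into cases according to whether the limit $A$ is positive definite. If $\lambda_{\min}(A) > 0$, then $A \in \mathbf{P}^n$ and $\lambda_{\min}(A_k)$ is eventually bounded away from zero, so the map $(s,t) \mapsto s/t$ is continuous at the point $(\lambda_{\max}(A), \lambda_{\min}(A))$. Hence $\kappa(A_k) = \lambda_{\max}(A_k)/\lambda_{\min}(A_k) \to \lambda_{\max}(A)/\lambda_{\min}(A) = \kappa(A)$, which is exactly the desired conclusion in the nondegenerate case.

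If instead $A$ is positive semidefinite but singular, then $\lambda_{\min}(A) = 0$, and by the stated convention $\kappa(A) = \infty$. Provided $A \ne 0$, we have $\lambda_{\max}(A) > 0$, while $\lambda_{\min}(A_k) > 0$ for every $k$ (because $A_k \in \mathbf{P}^n$) and $\lambda_{\min}(A_k) \to 0$. Therefore $\kappa(A_k) = \lambda_{\max}(A_k)/\lambda_{\min}(A_k) \to +\infty = \kappa(A)$, so the convergence holds in the extended sense.

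The step I expect to require the most care is this singular borderline case, where the conclusion hinges on two features of the hypotheses: that each $A_k$ is strictly positive definite, so that $\lambda_{\min}(A_k) > 0$ and the condition numbers are finite, and that the limit is nonzero, so that the numerator does not also vanish. I would flag explicitly that the degenerate limit $A = 0$ must be excluded, since, e.g., $A_k = k^{-1} I$ gives $\kappa(A_k) = 1$ for all $k$ yet $\kappa(0) = \infty$ under the convention; the positive-definite case, which is the one used later when $\tfrac{1}{k}\Phi_k^\rmT\Phi_k \to C \in \mathbf{P}^n$, is the clean one and suffices for the applications.
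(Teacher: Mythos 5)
Your proof is correct, and it actually goes beyond what the paper offers: the paper states this lemma with no proof at all (in the same spirit as its Lemma \ref{matrix_inverse_lemma}, whose one-line proof appeals to continuity of the matrix inverse, the authors evidently regard this claim as following from continuity of the extreme eigenvalues, which is precisely the fact you invoke via Weyl's inequality). Both of your nondegenerate cases are sound: when $A \in \mathbf{P}^n$, the map $(s,t) \mapsto s/t$ is continuous at $(\lambda_{\max}(A), \lambda_{\min}(A))$ since $\lambda_{\min}(A) > 0$; and when $A$ is singular but nonzero, the numerator tends to $\lambda_{\max}(A) > 0$ while the denominator tends to $0^+$, so $\kappa(A_k) \to \infty = \kappa(A)$ under the paper's convention. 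Your flagged degenerate case is a genuine catch rather than a mere technicality: with $A_k = (k+1)^{-1} I$ one has $A_k \in \mathbf{P}^n$ and $A_k \to 0$, yet $\kappa(A_k) \equiv 1$ while $\kappa(0) = \infty$ by the paper's convention, so the lemma as stated is false and needs the additional hypothesis $A \neq 0$ (or, more simply, $A \in \mathbf{P}^n$). The omission is harmless where the lemma is actually used, namely in Proposition \ref{prop:cond_num}, since there the limit of $\frac{1}{k}(\Phi_k^\rmT\Phi_k + R)$ is $C$, which is positive definite by the persistent-excitation assumption --- exactly the clean case you single out.
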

%
%

\begin{prop}
Under the assumptions and notation of Theorem \ref{theo::PE_implies_1/k}, $\lim_{k \rightarrow \infty} \kappa(\Phi_k^\rmT\Phi_k + R) = \kappa(C)\ge1$.
\label{prop:cond_num}
\end{prop}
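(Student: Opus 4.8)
The plan is to reduce the claim to the two condition-number lemmas stated just above by exhibiting a convergent sequence of positive-definite matrices whose common limit is $C$ and whose condition numbers coincide with those of $\Phi_k^\rmT\Phi_k + R$. The key structural fact I would exploit is that the condition number is scale-invariant, so I am free to divide $\Phi_k^\rmT\Phi_k + R$ by any positive scalar without altering $\kappa$; choosing the scalar $1/k$ is what lets me connect to the normalization used in the definition of $C$.

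First, for each $k\ge 1$ I would apply Lemma \ref{cond_numb_scaled_matrix} with $\alpha = 1/k$ to write
\begin{align}
    \kappa(\Phi_k^\rmT\Phi_k + R) = \kappa\left(\tfrac{1}{k}\Phi_k^\rmT\Phi_k + \tfrac{1}{k}R\right).
\end{align}
This step is legitimate because $\Phi_k^\rmT\Phi_k + R \in \mathbf{P}^n$: the term $\Phi_k^\rmT\Phi_k$ lies in $\mathbf{N}^n$, while $R\in\mathbf{P}^n$, so their sum is positive definite, and scaling by the positive scalar $1/k$ keeps the result in $\mathbf{P}^n$.

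Next I would identify the limit of the scaled sequence. By the definition of $C$ in \eqref{eq::PE_matrix_C}, $\tfrac{1}{k}\Phi_k^\rmT\Phi_k \to C$, and since $R$ is fixed, $\tfrac{1}{k}R \to 0$. Hence
\begin{align}
    \lim_{k\to\infty}\left(\tfrac{1}{k}\Phi_k^\rmT\Phi_k + \tfrac{1}{k}R\right) = C.
\end{align}
Because $(\phi_k)_{k=0}^\infty$ is PE, the characterization of persistency in terms of $C$ guarantees that $C$ is positive definite, i.e. $C\in\mathbf{P}^n$. Thus the sequence defined by $A_k \defeq \tfrac{1}{k}\Phi_k^\rmT\Phi_k + \tfrac{1}{k}R$ lies in $\mathbf{P}^n$ and converges to $A\defeq C\in\mathbf{P}^n$, so Lemma \ref{lemma:cond_num_of_sequence} applies and yields $\kappa(A_k)\to\kappa(C)$. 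Combining this with the scale-invariance identity from the first step gives $\kappa(\Phi_k^\rmT\Phi_k + R)\to\kappa(C)$. Finally, the bound $\kappa(C)\ge1$ is immediate from \eqref{eq::Cond_numb_Defn}, since for any positive-definite matrix $\lambda_{\max}(C)\ge\lambda_{\min}(C)>0$.

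I do not anticipate a serious obstacle, as the argument is essentially a careful bookkeeping of scalings. The one point that requires attention is index consistency, namely matching the $1/k$ normalization exactly to the definition \eqref{eq::PE_matrix_C} of $C$ and restricting to $k\ge1$ so that $1/k$ is a valid nonzero scalar in Lemma \ref{cond_numb_scaled_matrix}. A secondary subtlety is verifying the membership hypotheses of Lemma \ref{lemma:cond_num_of_sequence}, namely that every $A_k$ and the limit $A=C$ are genuinely positive definite; this is exactly where the PE assumption enters essentially rather than cosmetically, since positive definiteness of the limit is what keeps $\kappa(C)$ finite and makes the convergence of condition numbers meaningful.
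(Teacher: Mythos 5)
Your proposal is correct and follows essentially the same route as the paper's proof: scale-invariance of $\kappa$ under the factor $1/k$ (Lemma \ref{cond_numb_scaled_matrix}), identification of the limit of $\tfrac{1}{k}(\Phi_k^\rmT\Phi_k + R)$ with $C$, and then continuity of the condition number along the convergent sequence (Lemma \ref{lemma:cond_num_of_sequence}). Your added checks---restricting to $k\ge 1$, invoking the PE characterization to ensure $C\in\mathbf{P}^n$, and justifying $\kappa(C)\ge 1$---are sound refinements of hypotheses the paper leaves implicit, not a different argument.
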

%




\noindent {\it Proof.}       
%
Note that
\begin{align}
    \lim_{k \rightarrow \infty} \frac{1}{k}(\Phi_k^\rmT\Phi_k + R) = \lim_{k \rightarrow \infty} \frac{1}{k}\Phi_k^\rmT\Phi_k.
    \label{cond_num_proof_eq1}
\end{align}
Next, by Lemma \ref{cond_numb_scaled_matrix}, for all $k\in \BBN_0$,   $\kappa(\Phi_k^\rmT\Phi_k + R) = \kappa(\frac{1}{k}(\Phi_k^\rmT\Phi_k + R))$. Hence, 
\begin{align}
    \lim_{k \rightarrow \infty} \kappa(\Phi_k^\rmT\Phi_k + R) = \lim_{k \rightarrow \infty} \kappa\left(\frac{1}{k}(\Phi_k^\rmT\Phi_k + R)\right).
    \label{cond_num_proof_eq2}
\end{align}
Then, by Lemma \ref{lemma:cond_num_of_sequence}, 
\begin{align}
    \lim_{k \rightarrow \infty} \kappa(\Phi_k^\rmT\Phi_k + R) = \kappa\left(\lim_{k \rightarrow \infty} \frac{1}{k}(\Phi_k^\rmT\Phi_k + R)\right).
    \label{cond_num_proof_eq3}
\end{align}
Finally,   \eqref{cond_num_proof_eq1} implies
\begin{align}
    \lim_{k \rightarrow \infty} \kappa(\Phi_k^\rmT\Phi_k + R) = \kappa\left(\lim_{k \rightarrow \infty} \frac{1}{k}\Phi_k^\rmT\Phi_k\right) = \kappa(C) \ge 1. \tag*{\mbox{$\square$}} 
\end{align}
%


%
The condition number of the regressor matrix can be used as a metric to assess the convergence of the regularized RLS solution under persistent excitation, particularly when $\theta$ is unknown. Note that the convergence of the condition number of $\Phi_k^\rmT\Phi_k + R$ is  necessary but not sufficient  for the convergence of $\theta_k \rightarrow \theta$ as $k \rightarrow \infty$ under persistent excitation.

\section{Analysis of Consistency} 
This section presents numerical examples with diagnostics in order to investigate the properties the estimation error shown in Figure 1, namely, the sudden decrease in the estimation error and its asymptotic log-log slope.
To do this, we apply RLS to \eqref{yphit}, where the components of $\phi_k$ are randomly generated and $y_k$ is given by \eqref{yphit}.

\begin{example}\label{eg2}
Let
\begin{align}
    \theta = \begin{bmatrix} 1 & 1 & 1 & 1 \end{bmatrix}^{\rmT},
\end{align}
and $\theta_{0} = 0$. For all $k\in \BBN_0$, assume $\phi_k \in \BBR^{1 \times 4}$ are  i.i.d. and are randomly generated by 
\begin{align}
    \phi_k^\rmT \sim \mathcal{N}\big(0,\operatorname{diag}(\sigma_1^2,\sigma_2^2,\sigma_3^2,\sigma_4^2)\big),
\end{align}
where $\sigma_1^2 = 0.1$, $\sigma_2^2 = 1$, $\sigma_3^2 = 10$, and $\sigma_4^2 = 100$. Let $R = rI,$ where $r = 10^{-5}$.  Then,  
\begin{align}
    C = \lim_{k \rightarrow \infty}\frac{1}{k}\Phi_k^\rmT \Phi_k = \operatorname{diag}(\sigma_1^2,\sigma_2^2,\sigma_3^2,\sigma_4^2).
\end{align}
Hence, Theorem \ref{theo::PE_implies_1/k} implies that
\begin{align}
    \lim_{k \rightarrow \infty} k(\theta_k - \theta) &= 
    -r\begin{bmatrix}
    \sigma_1^{-2} & \sigma_2^{-2} & \sigma_3^{-2} & \sigma_4^{-2}
    \end{bmatrix}^\rmT.
    \label{ex2_asymptotic_slope}
\end{align}
Next, by Proposition \ref{prop:cond_num},
\begin{align}
    \lim_{k \rightarrow \infty} \kappa(\Phi_k^\rmT\Phi_k + rI) = \kappa(C) = \frac{\sigma_4^2}{\sigma_1^2} = 10^3.
    \label{ex2_cond_numb}
\end{align}
RLS is run for 10 independent simulations and, at each step $k,$ the estimation error $\|\theta_k - \theta\|$ is averaged over the 10 simulations. Figure \ref{ex2figure_error} shows the error $|\theta_{(m),k} - \theta_{(m)}|$ for $m = 1,2,3,4$ and the condition number $\kappa{(\Phi_k^{\rmT}\Phi_k + rI)}$, as well as the asymptotic behaviors given by \eqref{ex2_asymptotic_slope} and \eqref{ex2_cond_numb}.
%
%
\begin{figure}[ht]
    \begin{center}
     \includegraphics[trim = 0mm 0mm 0mm 0mm, clip, width=0.48\textwidth]{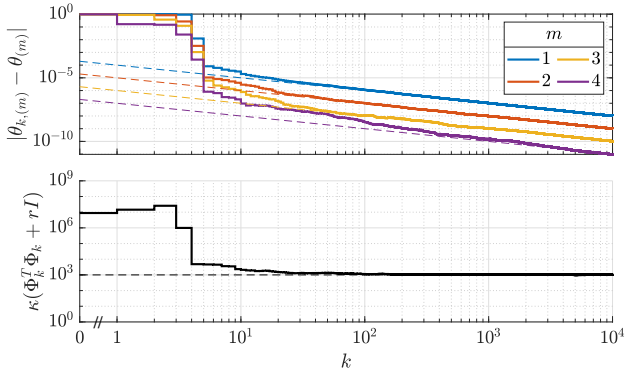}
    \caption{\footnotesize Example \ref{eg2}: Estimation error (top) as well as condition number of   $\Phi_k^\rmT\Phi_k + rI$ (bottom) for $k \in [0,10^4]$, averaged over 10 independent trials.
    The regressor $\phi_k$ is sampled $\phi_k^\rmT \sim \mathcal{N}\big(0,\operatorname{diag}(\sigma_1^2,\sigma_2^2,\sigma_3^2,\sigma_4^2)\big)$, where $\sigma_1^2 = 0.1$, $\sigma_2^2 = 1$, $\sigma_3^2 = 10$, and $\sigma_4^2 = 100$. The true parameters are $\theta = [ 1 \ 1 \ 1 \ 1 ]^\rmT$, and the regularization is $R = 10^{-5}I$. 
    The dashed lines show the asymptotic behaviors of the estimation error and condition number given by \eqref{ex2_asymptotic_slope} and \eqref{ex2_cond_numb} respectively.}
    \label{ex2figure_error}
    \end{center}
\end{figure}
\\
For convenience,  we define the {\it log slope} by
\begin{align}
    \Deltalog(f_k) \defeq \frac{\log(f_k) - \log(f_{k-1})}{\log(k) - \log(k-1)},
    \label{log_slope_defn}
\end{align}
which is independent of the logarithm base. 
%


\begin{prop}
Let $\theta,\theta_k,\theta_{k-1} \in \mathbb{R}^n$, let $p  \ge 1$, and assume there exists $c \in \mathbb{R}$ such that, for all $m = 1,\ldots n$,
\begin{align}
    \Deltalog(|\theta_{k,(m)} - \theta_{(m)}|) = c.
    \label{lemma1IF}
\end{align}
%
%
%
Then
\begin{align}
    \Deltalog(\|\theta_k - \theta\|_p) = c.
    \label{lemma1THEN}
\end{align}
\label{lemmaSameSlope}
\end{prop}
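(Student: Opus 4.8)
The plan is to show that the common-slope hypothesis forces every component error to scale by one and the same factor from step $k-1$ to step $k$, and then to pull this common factor out of the $p$-norm by absolute homogeneity.

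First I would unpack the definition \eqref{log_slope_defn} applied to the hypothesis \eqref{lemma1IF}. Writing $e_{k,(m)} \isdef |\theta_{k,(m)} - \theta_{(m)}|$ and setting $\rho \isdef \left(\frac{k}{k-1}\right)^c$, the assumption that $\Deltalog(e_{k,(m)}) = c$ for every $m$ is, after multiplying through by $\log k - \log(k-1)$ and exponentiating, equivalent to
\[
    e_{k,(m)} = \rho\, e_{k-1,(m)}, \qquad m = 1,\ldots,n .
\]
The decisive point is that $\rho$ does not depend on $m$: each component error is multiplied by the \emph{same} per-step factor. (Note that the hypothesis implicitly requires each $e_{k,(m)} > 0$, since otherwise $\log e_{k,(m)}$, and hence $\Deltalog(e_{k,(m)})$, would be undefined; in particular $\rho > 0$.)

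Second I would invoke the absolute homogeneity of the $p$-norm. Since every component of $\theta_k - \theta$ equals $\rho$ times the corresponding component of $\theta_{k-1} - \theta$, and $\rho > 0$,
\[
    \|\theta_k - \theta\|_p = \left(\sum_{m=1}^n e_{k,(m)}^p\right)^{1/p} = \rho\left(\sum_{m=1}^n e_{k-1,(m)}^p\right)^{1/p} = \rho\,\|\theta_{k-1} - \theta\|_p .
\]
Taking logarithms then gives $\log\|\theta_k - \theta\|_p - \log\|\theta_{k-1}-\theta\|_p = c\,(\log k - \log(k-1))$, and dividing by $\log k - \log(k-1)$ yields $\Deltalog(\|\theta_k - \theta\|_p) = c$, which is \eqref{lemma1THEN}.

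The computation itself is short; the only point requiring care — and the genuine content of the statement — is the uniformity of the slope across components. It is precisely because the single constant $c$ is shared by all $m$ that the per-step ratios coincide in the common factor $\rho$, which then factors cleanly out of the norm. Were the component slopes unequal, the slope of $\|\theta_k - \theta\|_p$ would instead be governed asymptotically by the slowest-decaying component and would not in general reduce to a single value, so no such exact identity would hold. The hypothesis $p \ge 1$ serves only to guarantee that $\|\cdot\|_p$ is a bona fide norm; the absolute homogeneity that the argument actually uses is all that is needed.
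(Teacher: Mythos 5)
Your proof is correct and follows essentially the same route as the paper's: extract the common per-step ratio $q = \left(\tfrac{k}{k-1}\right)^c$ from the hypothesis, factor it out of the $p$-norm sum, and divide the resulting log-ratio by $\log\tfrac{k}{k-1}$. Your added remarks on the implicit positivity of the component errors and on why uniformity of $c$ across components is essential are sound commentary but do not change the argument.
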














\noindent {\it Proof.} 
\eqref{log_slope_defn} and \eqref{lemma1IF} imply that, for all $m = 1,\ldots,n$,  
\begin{align}
    \frac{|\theta_{k,(m)} - \theta_{(m)}|}{|\theta_{k-1,(m)} - \theta_{(m)}|} = q \isdef
    \left(\frac{k}{k-1}\right)^c.
\end{align}
%
%
Next, note that,
\begin{align*}
    \frac{\|\theta_{k} - \theta\|_p}{\|\theta_{k-1} - \theta\|_p} 
    = \left(\frac{\sum_{m=1}^n |\theta_{k,(m)}-\theta_{(m)}|^p}{\sum_{m=1}^n |\theta_{k-1,(m)}-\theta_{(m)}|^p}\right)^{1/p}\\
    = \left(\frac{\sum_{m=1}^n |\theta_{k-1,(m)} - \theta_{(m)}|^p q^p}{\sum_{m=1}^n |\theta_{k-1,(m)}-\theta_{(m)}|^p}\right)^{1/p} = (q^p)^{1/p} = q.
\end{align*}
Thus, \eqref{log_slope_defn} yields
\begin{align}
    \Deltalog(\|\theta_k - \theta\|_p) 
    = \frac{ \log \left(  \frac{\|\theta_{k} - \theta\|_p}{\|\theta_{k-1} - \theta\|_p} \right) }{\log\left( \frac{k}{k-1} \right)}
    =\frac{c\log\left(\frac{k}{k-1}\right)}{\log\left( \frac{k}{k-1} \right)} = c.
    \tag*{\mbox{$\square$}}
\end{align}
Note that \eqref{ex2_asymptotic_slope} implies that, for all $m = 1,\hdots,n$, $\Deltalog(|\theta_{k,(m)} - \theta_{(m)}|) \rightarrow -1$ as $k \rightarrow \infty$. Hence, by Proposition \ref{lemmaSameSlope}, $\Deltalog\|\theta_k - \theta\| \rightarrow -1$ as $k \rightarrow \infty$. Therefore, Figure \ref{ex2figure_slopes} and all subsequent plots of log slope show $\Deltalog\|\theta_k - \theta\|$. 
Note that, in Figure \ref{ex2figure_slopes}, $\Deltalog\|\theta_k - \theta\|$ is also averaged over the 10 independent trials.
Moreover, since we are concerned with the asymptotic log slope of the estimation error, we plot $\Deltalog\|\theta_k - \theta\|$ for $ 9000 \le k \le 10000.$
As shown in Figure \ref{ex2figure_slopes}, the 100-step moving average of $\Deltalog\|\theta_k - \theta\|$ is approximately equal to $-1$.
Furthermore, note that $\kappa(\Phi_{k}^{\rmT}\Phi_{k} + rI)$ in Figure \ref{ex2figure_error} approaches $10^3$. 
The change in condition number $\kappa{(\Phi_{k}^{\rmT}\Phi_{k} + rI)}$ for $k \in [9000,10000]$ is shown in Figure \ref{ex2figure_slopes}, where
\begin{align}
\Delta(\kappa{(\Phi_{k}^{\rmT}\Phi_{k} + rI)}) \defeq \kappa{(\Phi_{k}^{\rmT}\Phi_{k} + rI)} - \kappa{(\Phi_{k-1}^{\rmT}\Phi_{k-1} + rI)}.    
\end{align}
$\hfill\mbox{$\huge\diamond$}$
\begin{figure}[ht]
    \begin{center}
     \includegraphics[trim = 0mm 0mm 0mm 0mm, clip, width=0.48\textwidth]{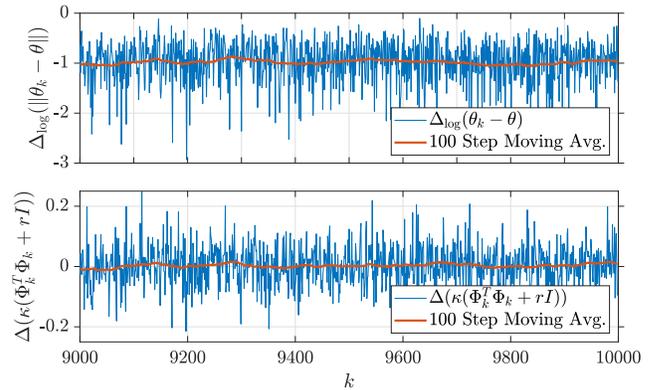}
    \caption{\footnotesize Example \ref{eg2}: Log slope of error (top) and $\Delta(\kappa{(\Phi_{k}^{\rmT}\Phi_{k} + rI)})$ (bottom) for $k \in [9000,10000]$, averaged over 10 independent trials. The regressor $\phi_k$ is sampled $\phi_k^\rmT \sim \mathcal{N}\big(0,\operatorname{diag}(\sigma_1^2,\sigma_2^2,\sigma_3^2,\sigma_4^2)\big)$, where $\sigma_1^2 = 0.1$, $\sigma_2^2 = 1$, $\sigma_3^2 = 10$, and $\sigma_4^2 = 100$. The 100-step moving averages show that $\Deltalog(\|\theta_k - \theta\|) \rightarrow -1$ and  $\kappa{(\Phi_{k}^{\rmT}\Phi_{k} + rI)} \rightarrow 10^3$ as $k \rightarrow \infty$.}
    \label{ex2figure_slopes}
    \end{center}
\end{figure}

\end{example}

\section{Application to FIR System Identification} 

Let a finite impulse response (FIR) be given by
\begin{align}
    y_k = \sum_{i=1}^w{G_i u_{k-i}}, \label{FIR}
\end{align}
for $k = 0,1,2,\hdots$, where $w$ is the  window length, $u_k \in \mathbb{R}^q$ is the  input, $y_k \in \mathbb{R}^p$ output, and $G_i \in \mathbb{R}^{p \times q}$ are the model coefficients. 
Note that \eqref{FIR} can be expressed as \eqref{yphit}, where
\begin{align}
    \phi_k &\defeq \begin{bmatrix} u_{k-1}^\rmT \ \hdots \ u_{k-w}^\rmT \end{bmatrix} \otimes I_p \in \mathbb{R}^{p \times n}\label{phi_k FIR},\\
    \theta &\defeq {\rm vec}{\begin{bmatrix} G_{1} \ \hdots \ G_{w} \end{bmatrix}} \in \mathbb{R}^{n}. \label{theta FIR}
\end{align}
To estimate $\theta$,  define
\begin{align}
    \theta_k &\defeq {\rm vec}{\begin{bmatrix} G_{1,k} \ \hdots \ G_{w,k} \end{bmatrix}} \in \mathbb{R}^{n}, \label{theta_k FIR}
\end{align}
where $G_{i,k} \in \mathbb{R}^{p \times m}$ are the estimated model coefficients and $n = wpq$. 
Note that $u_k \defeq 0$ for all $k < 0$.
\begin{example}\label{eg3}
\textit{FIR model}
Consider the FIR model
\begin{align} \label{FIR model}
    y_k = -1.5u_{k-1} + 0.9u_{k-2} + 0.15u_{k-3} - 0.15u_{k-4},
\end{align}
for all $k\in \BBN_0$, where $\theta$ is defined by \eqref{theta FIR}. 
For all $k\in \BBN_0$, the inputs $u_k \in \mathbb{R}$ are i.i.d and are sampled from the uniform distribution on $[-1,1]$.
RLS is applied to estimate $\theta$ with regularization $R = rI$ where $r = 10^{-5}$ and $\theta_0 = 0$. 
Figure \ref{eg3_error} shows the error $|\theta_{(m),k} - \theta_{(m)}|$ for $m = 1,2,3,4$ as well as the condition number $\kappa{(\Phi_k^{\rmT}\Phi_k + rI)}$.

Note that the error is   $O(10^{-1})$ at step $k = m$, but decreases to   $O(10^{-5})$ at step $k = m+1$, for all $m = 1,2,3,4$, as seen in Figure \ref{eg3_error}. 
Additionally, Figure \ref{eg3_error} shows that the condition number of $\Phi_k^\rmT\Phi_k + rI$ drops below $10^2$ at $k=4$, the step at which $\Phi_k^\rmT\Phi_k$ attains full rank.   
Similar to the results of Example \ref{eg2}, Figure \ref{eg3_slopes} shows that $\Deltalog(\|\theta_k - \theta\|)$  approaches $-1$ and $\kappa(\Phi_k^\rmT\Phi_k + rI)\approx 1$ asymptotically. 
\begin{figure}[ht]
    \begin{center}
    \vspace{1.5em}
     \includegraphics[trim = 0mm 0mm 0mm 0mm, clip,  width=0.48\textwidth]{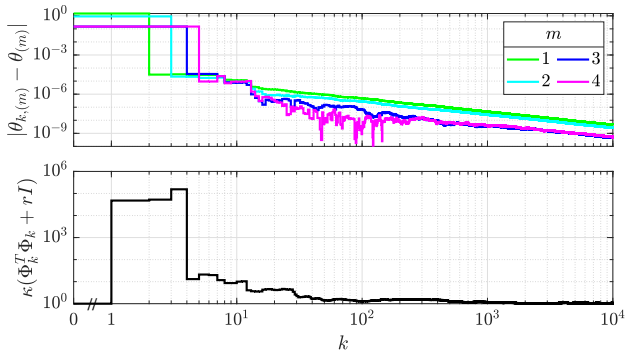}
    \vspace{-.2in}
    \caption{\footnotesize Example \ref{eg3}: Estimation error (top) and condition number of  $\Phi_k^\rmT\Phi_k = rI$ (bottom) for FIR system identification with steps $k \in [0,10^4]$. For all $k\in \BBN_0$, the input $u_k$ is sampled from the uniform distribution on $[-1,1]$. The true parameters are $\theta = [ -1.5 \ 0.9 \ 0.15 \ -0.15 ]^\rmT$, and the regularization is $R = 10^{-5}I$.} 
    \label{eg3_error}
    \end{center}
\end{figure}
\begin{figure}[ht]
    \begin{center}
    \vspace{.08in}
     \includegraphics[trim = 0mm 0mm 0mm 0mm, clip,  width=0.48\textwidth]{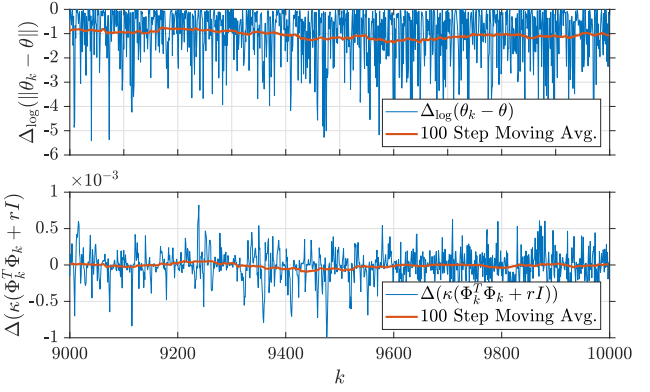}
    \vspace{-.2in}
    \caption{\footnotesize Example \ref{eg3}: Log slope of error (top) and $\Delta(\kappa(\Phi_k^\rmT\Phi_k + rI))$ (bottom) for FIR system identification  with $k \in [9000,10000].$  For all $k\in \BBN_0$, the input $u_k$ is sampled from the uniform distribution on $[-1,1]$. The 100-step moving averages show that $\Deltalog(\|\theta_k - \theta\|)$ approaches $-1$ and  $\kappa{(\Phi_{k}^{\rmT}\Phi_{k} + rI)} \approx 1$ as $k \rightarrow \infty.$} 
    \label{eg3_slopes}
    \end{center}
\end{figure}
$\hfill\mbox{$\huge\diamond$}$
\end{example}
\section{Application to IIR System Identification} 
Let an infinite impulse response (IIR) be given by
\begin{align}
    y_k = -\sum_{i=1}^w{F_i y_{k-i}} + \sum_{i=1}^w{G_i u_{k-i}}, 
    \label{IIR}
\end{align}
for $k = 0,1,2,\hdots$, where $w$ is the model window length, $u_k \in \mathbb{R}^q$ is the input, $y_k \in \mathbb{R}^p$ is the output, and $F_i \in \mathbb{R}^{p \times p}$ and $G_i \in \mathbb{R}^{p \times q}$ are the model coefficients.
Note that \eqref{IIR} can be expressed as \eqref{yphit}, where
\begin{align}
    \theta &\defeq {\rm vec}{\begin{bmatrix} F_{1} \ \hdots \ F_{w} \ G_{1} \ \hdots \ G_{w} \end{bmatrix}} \in \mathbb{R}^{n}, \label{theta IIR} \\
    \phi_k &\defeq \begin{bmatrix} -y_{k-1}^\rmT  \hdots  -y_{k-w}^\rmT \ u_{k-1}^\rmT  \hdots  u_{k-w}^\rmT \end{bmatrix} \otimes I_p \in \mathbb{R}^{p \times n}. \label{phi_k IIR}
\end{align}
To estimate $\theta$,  define
\begin{align}
    \theta_k &\defeq {\rm vec}{\begin{bmatrix} F_{1,k} \ \hdots \ F_{w,k} \ G_{1,k} \ \hdots \ G_{w,k} \end{bmatrix}} \in \mathbb{R}^{n}, \label{theta_k IIR} \\
\end{align}
where $F_{i,k} \in \mathbb{R}^{p \times p}$ and $G_{i,k} \in \mathbb{R}^{p \times q}$ are the estimated model coefficients and $n = wp(p+q)$. 
Note that $y_k \defeq 0$ and $u_k \defeq 0$ for all $k < 0$. 
\begin{example} \label{eg4}
\textit{IIR model}
Consider the IIR model
\begin{align}
    y_k = 1.5y_{k-1} - 0.9y_{k-2} + 0.15u_{k-1} - 0.15u_{k-2},
\end{align}
for all $k\in \BBN_0$, where $\theta$ is defined by \eqref{theta IIR}. 
For all $k\in \BBN_0$, the inputs $u_k \in \mathbb{R}$ are i.i.d. and are sampled from the uniform distribution on $[-1,1]$.
RLS is applied to estimate $\theta$ with regularization $R = rI$ where $r = 10^{-5}$ and $\theta_0 = 0$.
Figure \ref{eg4_error} shows the error $|\theta_{(m),k} - \theta_{(m)}|$ for $m = 1,2,3,4$ as well as the condition number $\kappa{(\Phi_k^{\rmT}\Phi_k + rI)}$. 
\begin{figure}[ht]
    \begin{center}
    \vspace{.08in}
     \includegraphics[trim = 0mm 0mm 0mm 0mm, clip,  width=0.48\textwidth]{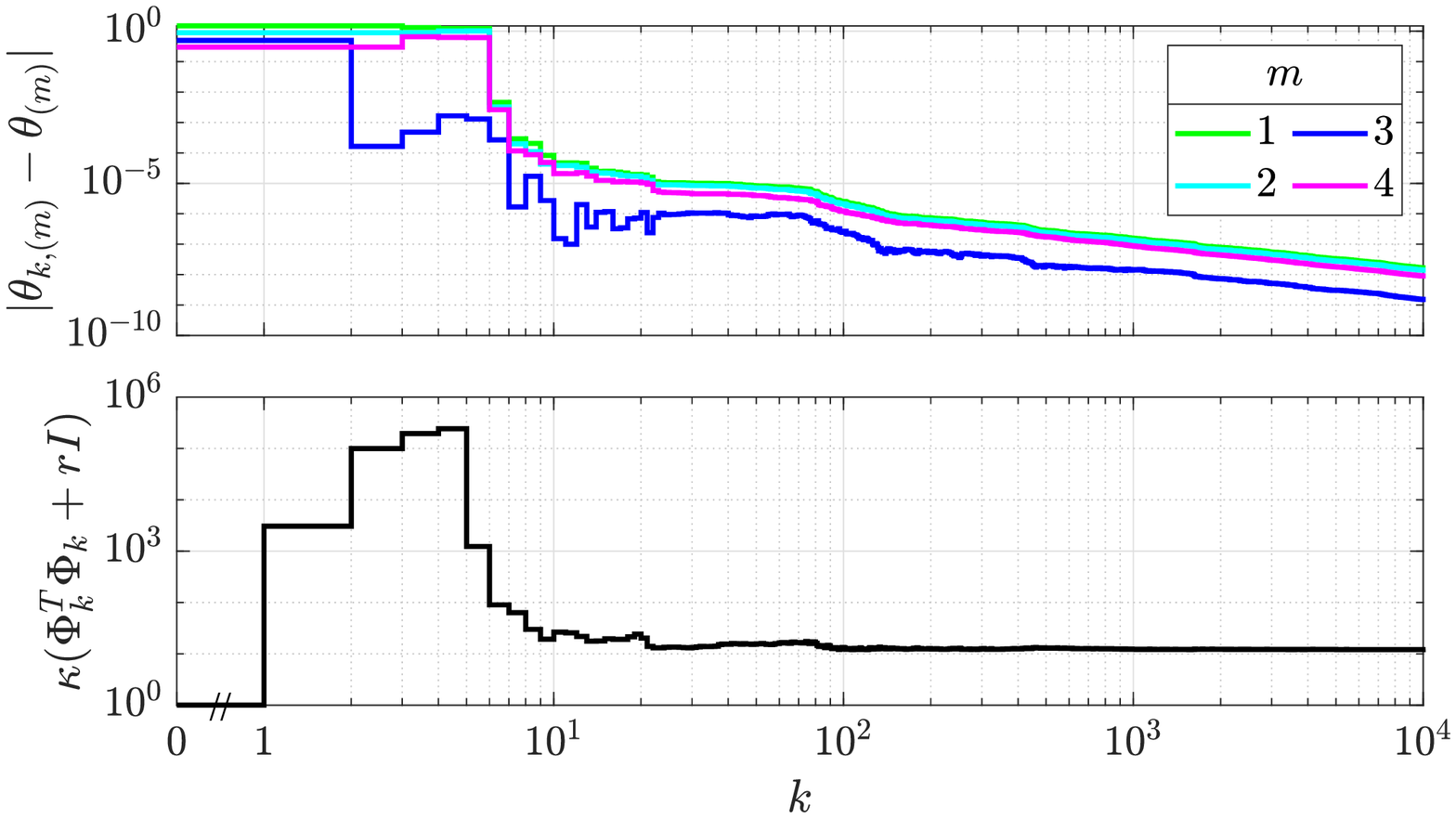}
    \vspace{-.2in}
    \caption{\footnotesize Example \ref{eg4}: Estimation error (top) and condition number of  $\Phi_k^\rmT\Phi_k = rI$ (bottom) for IIR system identification with steps $k \in [0,10^4]$. For all $k\in \BBN_0$, the input $u_k$ is sampled from the uniform distribution on $[-1,1]$. The true parameters are $\theta = [ -1.5 \ 0.9 \ 0.15 \ -\mspace{-2mu}0.15 ]^\rmT$, and the regularization is $R = 10^{-5}I$.} 
    \label{eg4_error}
    \end{center}
\end{figure}
\begin{figure}[ht]
    \begin{center}
     \includegraphics[trim = 0mm 0mm 0mm 0mm, clip,  width=0.48\textwidth]{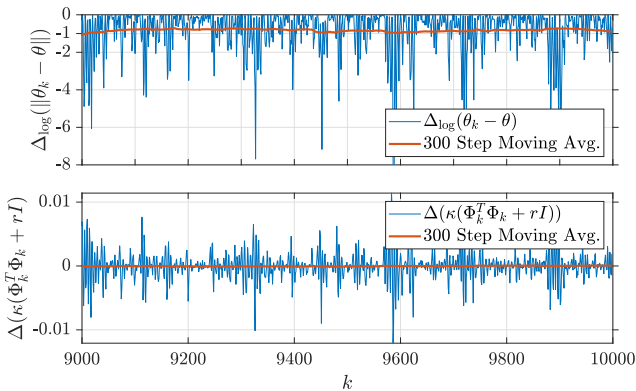}
    \vspace{-.2in}
    \caption{\footnotesize Example \ref{eg4}: Log slope of error (top) and $\Delta(\kappa(\Phi_k^\rmT\Phi_k + rI))$ (bottom) for IIR system identification  with $k \in [9000,10000]$. For all $k\in \BBN_0$, the input $u_k$ is sampled from the uniform distribution on $[-1,1]$. The 100-step moving averages show that $\Deltalog(\|\theta_k - \theta\|)$ approaches $-1$ and  $\kappa{(\Phi_{k}^{\rmT}\Phi_{k} + rI)} \approx 10$ as $k \rightarrow \infty$.} 
    \label{eg4_slopes}
    \end{center}
\end{figure}

Note that $|\theta_{k,(m)} - \theta_{(m)}|$ decreases approximately four orders of magnitude from step $k=1$ to $k=2$ for $m=3$. 
Additionally, $|\theta_{k,(m)} - \theta_{(m)}|$ further decreases approximately three orders of magnitude from steps $k=5$ to $k=6$ for $m = 1,2,4$.

Note that, for $k > 5$, the matrix $\Phi_k^\rmT\Phi_k + rI $ is ill-conditioned, with $\kappa{(\Phi_k^{\rmT}\Phi_k + rI)} \approx 10$ for all $k > 100$.
Yet, this ill-conditioned $(\Phi_k^{\rmT}\Phi_k + rI)$ for $k > 5$ does not affect the asymptotic behavior of $|\theta_{k,(m)} - \theta_{(m)}|$ for all values of $m$.
In fact, Figure \ref{eg4_slopes} shows that $\Deltalog(\|\theta_k - \theta\|)$ approaches $-1$ and  $\kappa(\Phi_k^{\rmT}\Phi_k + rI) \approx 10$ asymptotically.

$\hfill\mbox{$\huge\diamond$}$
\end{example}

\section{Conclusions and Future Work}
In optimization-based parameter estimation, regularization compensates for the lack of persistency, especially during startup.
This paper examined regularization-induced bias, which refers to bias in the parameter estimates due to regularization. 
This paper showed that, under persistency, the parameter estimates improve at precisely the step where the regressor becomes square and nonsingular.
It was also shown that the regularization-induced bias asymptotically decreases with a log slope of approximately $-1.$ 
Both of these effects were connected to the condition number of the regressor.

For system identification, RLS was applied to FIR and IIR systems.
For FIR systems, it was shown that the components of the estimation error decrease sequentially rather than simultaneously.
However, for IIR systems, the components of the estimation error decrease less predictably.
Future research will relate these trends to the condition number of the regressor.

Finally, one potential approach to overcoming regularization-induced bias is to use variable regularization \cite{AliHoaggMossbergBernstein}.
By varying the regularization based on the condition number of the regressor, it may be possible to reduce the regularization-induced bias.
The challenge is to vary the regularization in a computationally efficient manner without impacting the numerical stability of RLS.

\bibliographystyle{IEEEtran}
\bibliography{main}

\end{document}